\theoremstyle{plain}
\theoremstyle{plain}
\newtheorem{theorem}{Theorem}[section]
\newtheorem{example}[theorem]{Example}
\newtheorem{proposition}[theorem]{Proposition}
\theoremstyle{definition}
\newtheorem{definition}[theorem]{Definition}
\newtheorem{remark}[theorem]{Remark}
\date{}
\title{On Codes based on $BCK$-algebras}
\author{A. Borumand Saeid, H. Fatemidokht, C. Flaut and M. Kuchaki Rafsanjani}
\begin{document}
\maketitle
\textbf{Abstract.} {\small In this paper, we present some new connections between $BCK$-algebras and binary block codes.}
\begin{equation*}
\end{equation*}

\textbf{Keywords:} $BCI/BCK$-algebras; Binary block codes; Partially ordered set.\bigskip

\textbf{AMS Classification. \ }06F35, 94B60.%
\begin{equation*}
\end{equation*}

\section{Introduction}


$BCI/BCK$-algebras were first introduced in mathematics in 1966 by Y. Imai and
K. Iseki, through the paper [Im, Is; 66], as a generalization of the
concept of set-theoretic difference and propositional calculi. One of the
recent applications of $BCK$-algebras was given in the Coding Theory (see [Fl;
14], [Ju, So; 11]).

\section{Preliminaries}

\begin{definition}
An algebra $(X,\ast ,\theta )$ of type $(2,0)$ is called a \textit{$BCI$-algebra} if the following conditions are
fulfilled:
\begin{itemize}
\item 	$BCI$-1 ~~$((x*y)*(x*z))*(z*y)=\theta$
\item 	$BCI$-2 ~~$(x*(x*y))*y=\theta$
\item 	$BCI$-3 ~~~$x*x=\theta$
\item 	$BCI$-4 ~~~$x*y=\theta$ and $y*x=\theta$ imply $x=y$
\end{itemize}
If a $BCI$-algebra $X$ satisfies the following identity:
\begin{itemize}
\item 	$BCK$-5 ~~$\theta * x=\theta$
\end{itemize}
then $X$ is called a \textit{$BCK$-algebra} [Me, Ju; 94].
\end{definition}

The partial order relation on a $BCI$\textit{/}$BCK$-algebra is defined such that $x\leq y$ if and only if $x\ast y=\theta .$

A $BCI$\textit{/}$BCK$-algebra $X$ is called \textit{commutative }if $x\ast
(x\ast y)=y\ast (y\ast x),$ for all $x,y\in X$ and \textit{implicative }if $%
x\ast (y\ast x)=x,$ for all $x,y\in X.$ 

If $(X,\ast ,\theta )$ and $(Y,\circ ,\theta )$ are two $BCI$\textit{/}%
$BCK$-algebras, a map $f:X\rightarrow Y$ with the property $f\left( x\ast
y\right) =f\left( x\right) \circ f\left( y\right) ,$ for all $x,y\in X,$ is
called a $BCI$\textit{/}$BCK$\textit{-algebras morphism}$.$ If $f$ is a
bijective map, then $f$ is an \textit{isomorphism} of $BCI$\textit{/}%
$BCK$-algebras [Me, Ju; 94].

Hereafter in this paper, $X$ always denotes a finite  $BCI$\textit{/}%
$BCK$-algebra.\medskip

In the following, we will use some notations and results given in the paper [Ju, So; 11].
\begin{definition}
A mapping $\tilde{A}:A \rightarrow X$ is called a $BCK$-function on A, which A and X is a nonempty set and a $BCK$-algebra, respectively.
\end{definition}

\begin{definition}
A cut function of  $\tilde{A}$, for $q \in X$, is defined to be a mapping

$\tilde{A}_q:A \rightarrow \{0,1\}$
\newline
such that

$(\forall x \in A)(\tilde{A}_q(x)=1 \Leftrightarrow q*\tilde{A}(x)=\theta)$
\end{definition}

\begin{definition}
Let $A=\{1,2,\ldots,n\}$ and let X be a $BCK$-algebra. In [Ju, So; 11], to each $BCK$-function $\tilde{A}:A \rightarrow X$ can be associated a binary block-code of length $n$. A codeword in a binary block-code V is $v_x=x_1 x_2 \ldots x_n$  such that $x_i=x_j \Leftrightarrow A_x(i)=j$ for $i \in A$ and $j \in \{0,1\}$.
\end{definition}

Let $v_x=x_1x_2 \ldots x_n$ and $v_y=y_1y_2 \ldots y_n$ be two codewords belonging to a binary block-code V. Define an order relation $\leqslant_c$ on
the set of codewords belonging to a binary block-code V as follows [Ju, So; 11]:

$v_x \leqslant_c v_y \Leftrightarrow y_i \leqslant x_i$ for $i = 1,2,\ldots,n.$


\section{Main results}

\begin{definition}
Let $(S,\leqslant)$ be a partially ordered set. For $q \in S$, we define a mapping

$S_q:S\rightarrow \{0,1\}$
\newline
such that

$(\forall b\in S)(S_q(b)=1 \Leftrightarrow q \leqslant b).$

A codeword $v_x=x_1 x_2 \cdots x_n$ of a binary block-code V is determined as follow:

$x_i=x_j \Leftrightarrow S_x(i)=j$, for $i \in S$ and $j \in \{0,1\}.$
\end{definition}

\begin{example}
Let $S=\{0,1,2,3,4\}$ be a set with a partial order over $S$ showed in the Figure \ref{1}(a).
\begin{figure}[h!]
\centering
\includegraphics [width=0.7\textwidth]{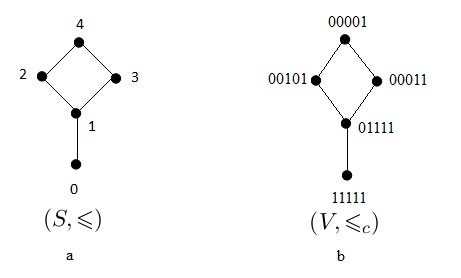}
\caption{a)partial ordering. b)order relation $\leqslant_c$}
\label{1}
\end{figure}
\newpage
then
\newline

\begin{tabular}{l l l l l l}
\hline
$S_s$ & 0 & 1 & 2 & 3 & 4 \\ \hline
$S_0$ & 1 & 1 & 1 & 1 & 1 \\
$S_1$ & 0 & 1 & 1 & 1 & 1 \\
$S_2$ & 0 & 0 & 1 & 0 & 1 \\
$S_3$ & 0 & 0 & 0 & 1 & 1 \\
$S_4$ & 0 & 0 & 0 & 0 & 1 \\
\hline
\end{tabular}
\newline

and thus $V1-P=\{11111,01111,00101,00011,00001\}.$
\end{example}

\begin{example}
Let $S=\{0,1,2,3,4\}$ be a set with a partial order over S  showed in the figure \ref{2}(a).
\begin{figure}[h!]
\centering
\includegraphics [width=0.6\textwidth]{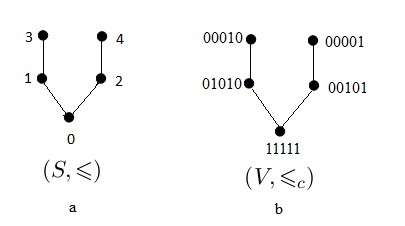}
\caption{a)partial ordering. b)order relation $\leqslant_c$}
\label{2}
\end{figure}
\newpage
then
\newline

\begin{tabular}{l l l l l l}
\hline
$S_s$ & 0 & 1 & 2 & 3 & 4 \\ \hline
$S_0$ & 1 & 1 & 1 & 1 & 1 \\
$S_1$ & 0 & 1 & 0 & 1 & 0 \\
$S_2$ & 0 & 0 & 1 & 0 & 1 \\
$S_3$ & 0 & 0 & 0 & 1 & 0 \\
$S_4$ & 0 & 0 & 0 & 0 & 1 \\
\hline
\end{tabular}
\newline

and thus $V2-P=\{11111,01010,00101,00010,00001\}.$
\end{example}

\begin{example}
Let $S=\{A,B,C,D\}$ be a set with a partial order over S as in the Figure \ref{3}(a).

\begin{figure}[h!]
\centering
\includegraphics [width=0.6\textwidth]{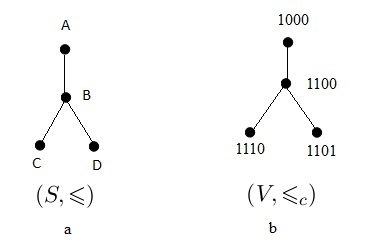}
\caption{a)partial ordering. b)order relation $\leqslant_c$}
\label{3}
\end{figure}

then
\newline

\begin{tabular}{l l l l l}
\hline
$S_s$ & A & B & C & D  \\ \hline
$S_A$ & 1 & 0 & 0 & 0  \\
$S_B$ & 1 & 1 & 0 & 0  \\
$S_C$ & 1 & 1 & 1 & 0  \\
$S_D$ & 1 & 1 & 0 & 1  \\
\hline
\end{tabular}
\newline

therefore $V3-P=\{1000,1100,1110,1101\}.$
\end{example}

In the following, we will compute binary block-code based on Definition 2.4. for $BCK$-algebras. We will show that there is a correspondence between the ordered relation on $BCK$-algebra and partial ordered set.

\begin{example}
Let $X = \{0, 1, 2, 3, 4\}$ be a $BCK$-algebra with the following Cayley table:
\newline

\begin{tabular}{l|l l l l l }
$*$ & 0 & 1 & 2 & 3 & 4 \\ \hline
$0$ & 0 & 0 & 0 & 0 & 0 \\
$1$ & 1 & 0 & 0 & 0 & 0 \\
$2$ & 2 & 1 & 0 & 1 & 0 \\
$3$ & 3 & 3 & 3 & 0 & 0 \\
$4$ & 4 & 4 & 4 & 4 & 0 \\
\end{tabular}


\begin{figure}[h!]
\begin{center}
\includegraphics [width=0.7\textwidth]{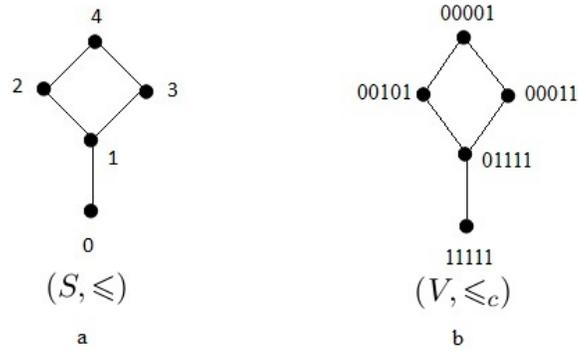}
\end{center}
\caption{a)ordered relation. b)order relation $\leqslant_c$}
\label{4}
\end{figure}
The above figure is the ordered relation on X.

Let $\tilde{A}:X \rightarrow X$ be a $BCK$-function on X given by
\newline

$\tilde{A}=
\begin{pmatrix}
0 & 1 & 2 & 3 & 4 \\
0 & 1 & 2 & 3 & 4
\end{pmatrix}$
\newline

then
\newline

\begin{tabular}{l l l l l l}
\hline
$\tilde{A}_x$ & 0 & 1 & 2 & 3 & 4 \\ \hline
$\tilde{A}_0$ & 1 & 1 & 1 & 1 & 1 \\
$\tilde{A}_1$ & 0 & 1 & 1 & 1 & 1 \\
$\tilde{A}_2$ & 0 & 0 & 1 & 0 & 1 \\
$\tilde{A}_3$ & 0 & 0 & 0 & 1 & 1 \\
$\tilde{A}_4$ & 0 & 0 & 0 & 0 & 1 \\
\hline
\end{tabular}
\newline

thus $V1-B=\{11111,01111,00101,00011,00001\}.$
\end{example}

\begin{example}
Let $X = \{0, 1, 2, 3, 4\}$ be a $BCK$-algebra with the following Cayley table:
\newline

\begin{tabular}{l|l l l l l }
$*$ & 0 & 1 & 2 & 3 & 4 \\ \hline
$0$ & 0 & 0 & 0 & 0 & 0 \\
$1$ & 1 & 0 & 1 & 0 & 1 \\
$2$ & 2 & 2 & 0 & 2 & 0 \\
$3$ & 3 & 1 & 3 & 0 & 3 \\
$4$ & 4 & 4 & 2 & 4 & 0 \\
\end{tabular}
\newline

\begin{figure}[h!]
\begin{center}
\includegraphics [width=0.7\textwidth]{Fig2}
\end{center}
\caption{a)ordered relation. b)order relation $\leqslant_c$}
\label{5}
\end{figure}

The above figure is the ordered relation on X.

Let $\tilde{A}:X \rightarrow X$ be a $BCK$-function on X given by
\newline

$\tilde{A}=
\begin{pmatrix}
0 & 1 & 2 & 3 & 4 \\
0 & 1 & 2 & 3 & 4
\end{pmatrix}$

then
\newline

\begin{tabular}{l l l l l l}
\hline
$\tilde{A}_x$ & 0 & 1 & 2 & 3 & 4 \\ \hline
$\tilde{A}_0$ & 1 & 1 & 1 & 1 & 1 \\
$\tilde{A}_1$ & 0 & 1 & 0 & 1 & 0 \\
$\tilde{A}_2$ & 0 & 0 & 1 & 0 & 1 \\
$\tilde{A}_3$ & 0 & 0 & 0 & 1 & 0 \\
$\tilde{A}_4$ & 0 & 0 & 0 & 0 & 1 \\
\hline
\end{tabular}
\newline

thus $V2-B=\{11111,01010,00101,00010,00001\}.$
\end{example}

\begin{remark}
 On a partial ordered set with a minimum element $%
\theta $ we can define a $BCK$-algebra structure(see [Fl; 14], $\left(
2.1)\right) $ From the obtained block-codes by the aforesaid methods, it is
obvious that $V1-P$ $=V1-B$ and $V2-P=V2-B$. We think that the problem
occurred because we use only the order of $BCK$-algebra, not its algebraic
properties. From above examples, it is obvious that the method presented in paper [Ju, So; 11] dose not depend on algebraic
properties of $BCK$-algebra. Also the obtained codes are not good codes, since their Hamming
distance is not good. According to the figures 1 to 5, there is a
one-to-one correspondence between the ordering relation $\leqslant $ and
order relation $\leqslant _{c}.$
\end{remark}

\medskip

Let $X$ be a $BCK$-algebra and $V$ be a linear binary block-code with $n$
codewords of length $n.$ We consider the matrix $M_{V}=\left( m_{i,j}\right)
_{i,j\in \{1,2,...,n\}}\in \mathcal{M}_{n}(\{0,1\})$ with the rows
consisting of the codewords of $V.$ This matrix is called \textit{the matrix
associated to the code} $V.\medskip $ We consider the codewords in $V$
lexicographic ordered in the ascending sense. With this remark, for \ $%
V=\{w_{1},....w_{n}\},$ we denote lines in $M_{V}$ with $%
L_{w_{1}},...,L_{w_{n}}.$ Obviously, $w_{1}=\underset{n-\text{time}}{%
\underbrace{00...0}}.$ On $V,$ we define the following multiplication "$\ast$"
\begin{equation}
w_{i}\ast w_{j}=w_{k}\text{ if and only if \ }L_{w_{i}}+L_{w_{j}}=L_{w_{k}}.
\tag{2.1.}
\end{equation}

\begin{proposition}
 \textit{With this multiplication,} $\left(
V,\ast ,\theta \right) ,$ \textit{where} $\theta =w_{1},$ \textit{becomes an
abelian group.}$\Box \medskip $
\end{proposition}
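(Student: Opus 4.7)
The plan is to reduce the statement to the elementary observation that the set of binary strings of length $n$ forms an abelian group under componentwise addition modulo $2$ (i.e.\ XOR), and then transport this structure to $V$ via the bijective correspondence $w_i \leftrightarrow L_{w_i}$. The operation $L_{w_i}+L_{w_j}$ must be interpreted as componentwise addition modulo $2$, so that the sum is again a row of length $n$ with entries in $\{0,1\}$.

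First I would verify that $\ast$ is well-defined on $V$. For each pair $w_i,w_j\in V$ we must produce a $w_k\in V$ with $L_{w_i}+L_{w_j}=L_{w_k}$. This is precisely the closure condition, and it is exactly the linearity hypothesis on $V$: by assumption $V$ is a linear binary block-code, so $V$, viewed inside $(\{0,1\}^n,+)$, is closed under XOR. I would remark that this closure is the only place the word \textit{linear} in the statement is used, and hence identify this as the main (indeed only nontrivial) obstacle: if one dropped linearity, the multiplication $\ast$ would not be well-defined in general.

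Once closure is in hand, the remaining axioms fall out by transporting the group structure of $(\{0,1\}^n,+)$ to $V$ through the map $w_i\mapsto L_{w_i}$. Concretely I would check, in order: associativity, $(w_i\ast w_j)\ast w_k = w_i\ast(w_j\ast w_k)$, from associativity of $+$ on $\{0,1\}^n$; the identity property, $w_1\ast w_j = w_j$, since $w_1 = \underbrace{00\cdots 0}_{n}$ gives $L_{w_1}=0$ and $0+L_{w_j}=L_{w_j}$; the existence of inverses, where each $w_i$ is its own inverse because $L_{w_i}+L_{w_i}=0=L_{w_1}=\theta$ (every element of $(\{0,1\}^n,+)$ has order dividing $2$); and finally commutativity, $w_i\ast w_j = w_j\ast w_i$, from commutativity of componentwise XOR.

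In summary, the whole proof amounts to exhibiting $(V,\ast,\theta)$ as a subgroup of the elementary abelian $2$-group $(\{0,1\}^n,+)$ via the injection $w\mapsto L_w$; linearity of $V$ is what ensures that this injection lands in a subgroup rather than merely a subset, and all group axioms are inherited from the ambient group. The only work to present is the closure step and the identification of $\theta=w_1$ with the zero string.
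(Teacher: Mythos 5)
Your proof is correct and is exactly the argument the paper has in mind: the paper states this proposition with the proof omitted as immediate, and the intended justification is precisely that $V$, being linear, is a subgroup of $(\{0,1\}^n,+)$ under componentwise XOR, with $w_1=00\cdots 0$ as identity and every element self-inverse. You supply the details (closure from linearity, transport of associativity, identity, inverses, commutativity) that the paper leaves implicit; nothing is missing.
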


\begin{remark}
The above group is a $BCI$-algebra.\medskip
\end{remark}

\begin{example}
We consider the binary linear code $%
C=\{0000,0001,0010,0011\}=\{\theta ,A,B,C\}.$ The associated
$BCI$-algebra(group) is $X=\{\theta ,A,B,C\}$ with zero element $\theta $ and
multiplication given in the following table:

\begin{tabular}{l|llll}
$\ast $ & $\theta $ & $A$ & $B$ & $C$ \\ \hline
$\theta $ & $\theta $ & \multicolumn{1}{|l}{$A$} & \multicolumn{1}{|l}{$B$}
& \multicolumn{1}{|l|}{$C$} \\ \cline{2-5}
$A$ & $A$ & \multicolumn{1}{|l}{$\theta $} & \multicolumn{1}{|l}{$C$} &
\multicolumn{1}{|l|}{$B$} \\ \cline{2-5}
$B$ & $B$ & \multicolumn{1}{|l}{$C$} & \multicolumn{1}{|l}{$\theta $} &
\multicolumn{1}{|l|}{$A$} \\ \cline{2-5}
$C$ & $C$ & \multicolumn{1}{|l}{$B$} & \multicolumn{1}{|l}{$A$} &
\multicolumn{1}{|l|}{$\theta $} \\ \cline{2-5}
\end{tabular}
\end{example}

\ \ \ \

\begin{definition}
Let $(X,\ast ,\theta )$ be a $BCI$/$BCK$-algebra,
and $I\subseteq X.$ We say that $I$ is a \textit{right-ideal} if $\theta \in
I$ and $x\in I,y\in X$ imply $x\ast y\in I$. An ideal $I$ of a
$BCI$/$BCK$-algebra $X$ is called \textit{a closed ideal }if it is also \textit{a%
} \textit{subalgebra }of $X$ (i.e. $\theta \in I$ and if $x,y\in I$ it
results that $x\ast y\in I$).\medskip
\end{definition}

Let $C$ be a binary block code. In Theorem 2.9, from [Fl; 14], we find a
$BCK$-algebra $X$ such that the obtained binary block-code $V_{X}$ contains
the binary block-code $C$ as a subset.

Let $C$ be a binary block code with $m$ codewords of length $q.$ With the
above notations, let $X$ \ be the associated $BCK$-algebra and \ $W$\ $%
=\{\theta ,w_{1},...,w_{m+q}\}~$the associated binary block code which
include the code $C.$ We consider the codewords $\theta
,w_{1,}w_{2},...,w_{m+q}$ lexicographic ordered, $\theta \geq
_{lex}w_{1}\geq _{lex}w_{2}\geq _{lex}...\geq _{lex}w_{m+q}.$ Let $M\in
\mathcal{M}_{m+q+1}(\{0,1\})$ be the associated matrix with the rows $\theta
,w_{1},...,w_{m+q},$ in this order. We denote with $L_{w_{i}}$ and $C_{w_{j}}
$ the lines and columns in the matrix $M$. The sub-matrix $M^{\prime }$ of
the matrix $M$ with the rows $L_{w_{1}},...,L_{w_{m}}$ and the columns $%
C_{w_{m+1}},...,C_{w_{m+q}}-~$is the matrix associated to the code $C.$
\medskip

\begin{proposition}
\textit{With the above notations, we have that }$%
\{\theta ,w_{m+1},...,w_{m+q}\}$\textit{\ determines a closed  right ideal
in the algebra} $X.\medskip $
\end{proposition}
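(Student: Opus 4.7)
The plan is to verify directly that $I := \{\theta, w_{m+1}, \ldots, w_{m+q}\}$ satisfies both the right-ideal axiom and the subalgebra (closedness) axiom from Definition 3.5, making crucial use of the explicit construction of $X$ and $W$ given in [Fl; 14, Theorem 2.9].

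First, I would recall from [Fl; 14] how the $BCK$-algebra $X$ is built from the code $C$. Its carrier is $W = \{\theta, w_{1}, \ldots, w_{m+q}\}$, equipped with a partial order whose minimum is $\theta$; the operation $\ast$ is determined by this poset in such a way that $x \ast y = \theta$ exactly when $x \leq y$, and otherwise $x \ast y$ sits on the branch of $x$ away from $y$. The key structural fact, read off the layout of the matrix $M$, is that the extra codewords $w_{m+1}, \ldots, w_{m+q}$ are precisely the \emph{column-indicator} codewords appended in order to realise $C$ as the sub-matrix $M'$: hence, in the induced partial order, they form a separate sub-chain/branch from the row-codewords $w_{1}, \ldots, w_{m}$, whose only common element with the block $\{w_{m+1}, \ldots, w_{m+q}\}$ is $\theta$.

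Having fixed that picture, I would verify the axioms in turn. Membership $\theta \in I$ is immediate. For the right-ideal property I pick $w_{i} \in \{w_{m+1}, \ldots, w_{m+q}\}$ and an arbitrary $y \in X$, and split into cases. If $y = \theta$, then $w_{i}\ast \theta = w_{i} \in I$. If $y \in I$, the product either collapses to $\theta$ (when $w_{i}\leq y$) or stays inside the column-block $\{w_{m+1},\ldots, w_{m+q}\}$ by the internal ordering of that branch. If $y \in \{w_{1},\ldots, w_{m}\}$ is a row-codeword, the structural separation noted above forces $w_{i}$ and $y$ to be incomparable in $W$, so by the defining rule of $\ast$ we get $w_{i}\ast y = w_{i}\in I$. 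The closed-subalgebra condition is then an immediate specialisation: restricting $y$ to lie in $I$ is just a sub-case of what we have already verified, so $x,y\in I$ implies $x\ast y\in I$.

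The main obstacle I anticipate is the case $y \in \{w_{1},\ldots, w_{m}\}$: one must invoke the explicit matrix construction of [Fl; 14, Theorem 2.9] to argue rigorously that every $w_{i}$ with $i\geq m+1$ is poset-incomparable to every $w_{j}$ with $j\leq m$, so that multiplication cannot push an element of $I$ into the row-block. Once this separation is pinned down (essentially by reading off the columns $C_{w_{m+1}},\ldots, C_{w_{m+q}}$ versus the rows $L_{w_{1}},\ldots, L_{w_{m}}$ of $M$), the rest of the argument is a routine check against the $BCK$-axioms and the ideal definition.
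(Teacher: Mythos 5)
Your overall strategy is workable, but as written it leans on a structural claim that is false, and you have flagged precisely that claim as the step needing rigorous justification. You assert that in the poset $W$ the column-codewords $w_{m+1},\dots ,w_{m+q}$ form a branch meeting the row-codewords $w_{1},\dots ,w_{m}$ only in $\theta$, so that elements of the two blocks are pairwise incomparable. This cannot hold unless $C$ is the zero code: $C$ is recovered exactly from the entries of $M$ in rows $L_{w_{1}},\dots ,L_{w_{m}}$ and columns $C_{w_{m+1}},\dots ,C_{w_{m+q}}$, and a $1$ in position $(j,i)$ of that sub-matrix means $w_{j}\ast w_{i}=\theta$, i.e.\ $w_{j}\leq w_{i}$ --- a comparability between the two blocks. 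In the paper's own Example one has $w_{2}\leq w_{8}$ and $w_{2}\leq w_{9}$ (equivalently $w_{2}\ast w_{8}=w_{2}\ast w_{9}=\theta$ in the multiplication table). So the ``main obstacle'' you propose to overcome cannot be overcome; fortunately it also does not need to be.

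The reason your cases nevertheless all land in $I$ --- and the paper's actual one-line proof --- is that the multiplication $(2.1)$ of [Fl; 14] satisfies $x\ast y\in \{\theta ,x\}$ for all $x,y\in X$: namely $x\ast y=\theta$ if $x\leq y$ and $x\ast y=x$ otherwise. Your phrasing ``$x\ast y$ sits on the branch of $x$ away from $y$'' is where this precision is lost, and it is what forces you into the unnecessary case analysis. Once the rule is stated exactly, every subset of $X$ containing $\theta$ is automatically closed under right multiplication by arbitrary elements of $X$, hence is a closed right ideal, with no discussion of which elements are comparable to which. The only order-theoretic fact your case split would actually require is the one-sided statement that no $w_{i}$ with $i\geq m+1$ lies below any $w_{j}$ with $j\leq m$ (which is true, since rows $m+1,\dots ,m+q$ of $M$ vanish on columns $2,\dots ,m+1$), not the full incomparability you invoke --- but even that becomes superfluous.
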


\begin{proof}
Let $Y=$ $\{\theta ,w_{m+1},...,w_{m+q}\}.$ Due to the
multiplications and the order relation $\preceq $ given by the relations $%
\left( 2.1\right) $ and $\left( 1.1\right) $ from [Fl; 14], we can have only
the following two possibilities:  $w_{i}\ast w_{j}=\theta $ or $w_{i}\ast
w_{j}=w_{i}.$  Therefore $Y$ is a right-ideal in $X.$ The multiplication $%
\left( 2.1.\right) $ is :%
\begin{equation*}
\left\{
\begin{array}{c}
\theta \ast x=\theta \text{ and }x\ast x=\theta ,\forall x\in X; \\
x\ast y=\theta ,\text{ if \ }x\leq y,\ \ \ x,y\in X; \\
x\ast y=x,\text{ otherwise.}%
\end{array}%
\right.
\end{equation*}%
$\Box \medskip $
\end{proof}

\begin{remark}
 From Proposition 3.12, we obtain that to each binary
block code we can associate a $BCK$-algebra in which this code determines a
right ideal.\medskip
\end{remark}

Let $A$ be a nonempty set and $X$ \ be a $BCK$-algebra.\medskip

\begin{proposition}
\textit{Let} $C$ \textit{be a binary block code
with} $m$ \textit{codewords of length} $q$ \textit{and let} $X$ \ \textit{be
the associated $BCK$-algebra, as the above. Therefore, there are the sets} $A$
\textit{and} $B\subseteq X,$ t\textit{he $BCK$-function} $f:A\rightarrow X$
\textit{and a cut function} $\ f_{r}$ \textit{such that}
\begin{equation*}
C=\{f_{r}:A\rightarrow \{0,1\}~/~\ f_{r}\left( x\right) =1,\text{if \ and \
only \ if \ }r\ast \ f\left( x\right) =\theta ,\forall x\in A,r\in B\}.\Box
\end{equation*}
\end{proposition}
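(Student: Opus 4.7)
The plan is to exhibit concrete choices for $A$, $B$, $f$, and $f_r$ using the data already produced in the construction preceding Proposition 3.12, and then verify that the cut-function condition reproduces exactly the entries of the sub-matrix $M'$ whose rows are the codewords of $C$.

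First I would set $B=\{w_{1},\dots ,w_{m}\}\subseteq X$, corresponding to the rows $L_{w_{1}},\dots ,L_{w_{m}}$ of $M'$, and $A=\{w_{m+1},\dots ,w_{m+q}\}\subseteq X$, corresponding to the columns $C_{w_{m+1}},\dots ,C_{w_{m+q}}$ of $M'$. Then I would take $f:A\rightarrow X$ to be the canonical inclusion; this is automatically a $BCK$-function in the sense of Definition~2.2 since it is just a mapping from a nonempty set into the $BCK$-algebra $X$. For each $r\in B$, the associated cut function $f_{r}:A\rightarrow \{0,1\}$ is then forced by Definition~2.3: $f_{r}(x)=1$ if and only if $r\ast f(x)=\theta $.

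Next I would check that $f_{r}$ is exactly the row $L_{r}$ of $M'$ restricted to the columns indexed by $A$. By the multiplication displayed in the proof of Proposition~3.12, $r\ast w_{m+j}=\theta $ holds precisely when $r\leq w_{m+j}$ in $X$. On the other hand, the construction of $X$ from the code (the relations $(2.1)$ together with $(1.1)$ of [Fl;14]) guarantees that the order $\leq $ on $X$ coincides with the order $\leqslant _{c}$ on codewords, so $r\leq w_{m+j}$ is equivalent to the entry of $M$ in row $L_{r}$ and column $C_{w_{m+j}}$ being $1$. Since this entry is, by definition of $M'$, the $j$-th bit of the codeword of $C$ corresponding to $r$, the equivalence $f_{r}(w_{m+j})=1 \Leftrightarrow r\ast f(w_{m+j})=\theta $ reproduces exactly that bit, and as $r$ ranges over $B$ we recover every codeword of $C$.

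The argument is therefore essentially bookkeeping once the correct $A$ and $B$ are chosen; the only nontrivial point is the identification of the $BCK$-order on $X$ with the coordinatewise order on the rows of $M$, which is where I expect most of the actual verification to go. That step is handled by invoking the explicit form of the multiplication recalled in the proof of Proposition~3.12 (where $x\ast y=\theta $ iff $x\leq y$, and $x\ast y=x$ otherwise), so no new computation beyond citing that description should be required.
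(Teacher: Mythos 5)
Your proposal is correct and matches exactly what the paper intends: the paper itself omits the proof (the proposition is stated with a terminal $\Box$), but its Example 3.16 and Remark 3.19 make the same choices you do, namely $B$ the set of elements indexing the rows $L_{w_{1}},\dots ,L_{w_{m}}$ of $M'$ and $A$ the set indexing the columns $C_{w_{m+1}},\dots ,C_{w_{m+q}}$, with $f$ the inclusion and $f_{r}$ the cut function recovering the $r$-th codeword via $r\ast f(x)=\theta \Leftrightarrow r\leq x$. Your verification that the cut functions reproduce the entries of $M'$ is precisely the bookkeeping the authors leave implicit, so there is nothing to add beyond noting that the paper's own example shifts the indices by one because it counts $\theta$ as the first element.
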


\begin{remark}
i) Let $S=\{1,2,...,n\}$ be the set with $n$ elements.
We know that $\left( \mathcal{P}\left( S\right) ,\Delta ,\cap \right) $ is a
Boolean ring, where $\mathcal{P}\left( S\right) $ is the power set of the
set $S,\Delta $ is symmetric difference of the sets and $\cap $ is the
intersection of two sets. Let $\mathfrak{F}=\{f:S\rightarrow \{0,1\}~/$ $f$
\ function$\}.$ To each $f\in \mathfrak{F}$ corresponds a binary block
codeword. To \ each binary block codeword $c_{1}$ corresponds an element
from $\mathcal{P}\left( S\right) $. Indeed, \ to each binary codeword $%
c=(i_{1},...,i_{n})$ we will associate the set $I_{c}=%
\{j_{1},j_{2},...,j_{k}\}\in \mathcal{P}\left( S\right) $ such that $%
i_{j_{1}}=i_{j_{2}}=...=i_{j_{k}}=1.$

ii) Using the above established correspondence, if \ $C=%
\{c_{1},c_{2},...,c_{m}\}$ is a linear binary block code and $%
Q=\{I_{c_{1}},I_{c_{2}},...,I_{c_{m}}\}\subseteq \mathcal{P}\left( S\right)
, $ where $I_{c_{i}}~$is the associated subset for the codeword $c_{1}$,
then $Q$ is a sub-ring in the Boolean ring $\left( \mathcal{P}\left( S\right)
,\Delta ,\cap \right) .$ It results a bijective map between the sub-rings of
the Boolean ring $\left( \mathcal{P}\left( S\right) ,\Delta ,\cap \right) $
and linear binary block codes with codewords of length $n.\medskip $
\end{remark}

\begin{example}
 i) Let $C=\{0000,0001,0010,0011\}=%
\{w_{6},w_{7},w_{8},w_{9}\}$ be a linear binary block code and let $%
X=\{\theta ,w_{2},w_{3},w_{4},w_{5},w_{6},w_{7},w_{8},w_{9}\}$ be the
obtained $BCK$-algebra as in Theorem 2.9 from [Fl; 14]. $\ $The multiplication
of this algebra is given in the below table

\begin{tabular}{l|lllllllll}
$\ast $ & $\theta $ & $w_{2}$ & $w_{3}$ & $w_{4}$ & $w_{5}$ & $w_{6}$ & $%
w_{7}$ & $w_{8}$ & $w_{9}$ \\ \hline
$\theta $ & $\theta $ & \multicolumn{1}{|l}{$\theta $} & \multicolumn{1}{|l}{%
$\theta $} & \multicolumn{1}{|l}{$\theta $} & \multicolumn{1}{|l}{$\theta $}
& \multicolumn{1}{|l}{$\theta $} & \multicolumn{1}{|l}{$\theta $} &
\multicolumn{1}{|l}{$\theta $} & \multicolumn{1}{|l|}{$\theta $} \\
\cline{2-10}
$w_{2}$ & $w_{2}$ & \multicolumn{1}{|l}{$\theta $} & \multicolumn{1}{|l}{$%
w_{2}$} & \multicolumn{1}{|l}{$w_{2}$} & \multicolumn{1}{|l}{$w_{2}$} &
\multicolumn{1}{|l}{$w_{2}$} & \multicolumn{1}{|l}{$w_{2}$} &
\multicolumn{1}{|l}{$\theta $} & \multicolumn{1}{|l|}{$\theta $} \\
\cline{2-10}
$w_{3}$ & $w_{3}$ & \multicolumn{1}{|l}{$w_{3}$} & \multicolumn{1}{|l}{$%
\theta $} & \multicolumn{1}{|l}{$w_{3}$} & \multicolumn{1}{|l}{$w_{3}$} &
\multicolumn{1}{|l}{$w_{3}$} & \multicolumn{1}{|l}{$w_{3}$} &
\multicolumn{1}{|l}{$\theta $} & \multicolumn{1}{|l|}{$w_{3}$} \\
\cline{2-10}
$w_{4}$ & $w_{4}$ & \multicolumn{1}{|l}{$w_{4}$} & \multicolumn{1}{|l}{$w_{4}
$} & \multicolumn{1}{|l}{$\theta $} & \multicolumn{1}{|l}{$w_{4}$} &
\multicolumn{1}{|l}{$w_{4}$} & \multicolumn{1}{|l}{$w_{4}$} &
\multicolumn{1}{|l}{$w_{4}$} & \multicolumn{1}{|l|}{$\theta $} \\
\cline{2-10}
$w_{5}$ & $w_{5}$ & \multicolumn{1}{|l}{$w_{5}$} & \multicolumn{1}{|l}{$w_{5}
$} & \multicolumn{1}{|l}{$w_{5}$} & \multicolumn{1}{|l}{$\theta $} &
\multicolumn{1}{|l}{$w_{5}$} & \multicolumn{1}{|l}{$w_{5}$} &
\multicolumn{1}{|l}{$w_{5}$} & \multicolumn{1}{|l|}{$w_{5}$} \\ \cline{2-10}
$w_{6}$ & $w_{6}$ & \multicolumn{1}{|l}{$w_{6}$} & \multicolumn{1}{|l}{$w_{6}
$} & \multicolumn{1}{|l}{$w_{6}$} & \multicolumn{1}{|l}{$w_{6}$} &
\multicolumn{1}{|l}{$\theta $} & \multicolumn{1}{|l}{$w_{6}$} &
\multicolumn{1}{|l}{$w_{6}$} & \multicolumn{1}{|l|}{$w_{6}$} \\ \cline{2-10}
$w_{7}$ & $w_{7}$ & \multicolumn{1}{|l}{$w_{7}$} & \multicolumn{1}{|l}{$w_{7}
$} & \multicolumn{1}{|l}{$w_{7}$} & \multicolumn{1}{|l}{$w_{7}$} &
\multicolumn{1}{|l}{$w_{7}$} & \multicolumn{1}{|l}{$\theta $} &
\multicolumn{1}{|l}{$w_{7}$} & \multicolumn{1}{|l|}{$w_{7}$} \\ \cline{2-10}
$w_{8}$ & $w_{8}$ & \multicolumn{1}{|l}{$w_{8}$} & \multicolumn{1}{|l}{$w_{8}
$} & \multicolumn{1}{|l}{$w_{8}$} & \multicolumn{1}{|l}{$w_{8}$} &
\multicolumn{1}{|l}{$w_{8}$} & \multicolumn{1}{|l}{$w_{8}$} &
\multicolumn{1}{|l}{$\theta $} & \multicolumn{1}{|l|}{$w_{8}$} \\
\cline{2-10}
$w_{9}$ & $w_{9}$ & \multicolumn{1}{|l}{$w_{9}$} & \multicolumn{1}{|l}{$w_{9}
$} & \multicolumn{1}{|l}{$w_{9}$} & \multicolumn{1}{|l}{$w_{9}$} &
\multicolumn{1}{|l}{$w_{9}$} & \multicolumn{1}{|l}{$w_{9}$} &
\multicolumn{1}{|l}{$w_{9}$} & \multicolumn{1}{|l|}{$\theta $} \\
\cline{2-10}
\end{tabular}

From Proposition 3.12, we \ remark that $\{\theta ,w_{6},w_{7},w_{8},w_{9}\}$
is a right ideal in the $BCK$-algebra $X.$ From \ Proposition 3.14, for $%
A=\{w_{6},w_{7},w_{8},w_{9}\}$ and $B=\{w_{2},w_{3},w_{4},w_{5}\},$ we
recover the initial code $C.\medskip $
\end{example}

\begin{example}
For the same linear binary block code\textbf{\ }$%
C=\{0000,0001,0010,0011\},$ let $Q=\{\varnothing ,\{4\},\{3\},\{3,4\}\}$ as
in Remark 3.15 ii). It is clear that $Q$ is a sub-ring in the Boolean ring $%
\left( \mathcal{P}\left( \{1,2,3,4\}\right) ,\Delta ,\cap \right) $ and $C$
can be considered as a sub-ring of this Boolean ring.\medskip
\end{example}

\begin{remark}
In [Fl; 14], Theorem 2.2, \ the studied binary block
codes have Hamming distance equal with $1.$ In the same paper, Theorem 2.9,
to an arbitrary binary block code $C$ we associate a $BCK$ algebra $X$ and the
code associated to this algebra includes the code $C.~$Proposition 3.14
improved this theorem since we can even obtain the code $C$ and from
Proposition 3.12\ we have that the code $C$ generate a right ideal in the
algebra $X.$%
\end{remark}

\begin{remark}
The  obtained results of above remarks and propositions can be illustrated by partially ordered sets. Let $C$ be a binary block code with $m$ codewords of length $q$. According to Proposition 2.8 and Theorem 2.9 in [Fl; 14], we can find the matrix $M\in \mathcal{M}_{m+q+1}(\{0,1\})$ that is the matrix associated to the code $C$. Let $S$ be the associated partially ordered set. Therefore, there are the sets $A$ and $B \subseteq S$ and the function $f:A\rightarrow S$, such that we can define the bellow set:
\begin{equation*}
C=\{f_{r}:A\rightarrow \{0,1\}~/~\ f_{r}\left( b\right) =1,\text{if \ and \
only \ if \ }r \leq b ,\forall b\in A,r\in B\}.
\end{equation*}
Here, $A = \{m+2, \cdots , m+q+1\}$ and $B = \{2, \cdots, m+1\}$.
\end{remark}

\begin{example}
Let $C=\{0000,0001,0010,0011\}$ be a linear binary block code and let $S=\{1,2,3,4,5,6,7,8,9\}$. In this example $m=q=4$. The matrix associated to the code C is:
\newline

\begin{tabular}{|l|l|l|l|l|l|l|l|l|}
\hline
1 & 1 & 1 & 1 & 1 & 1 & 1 & 1 & 1 \\ \hline
0 & 1 & 0 & 0 & 0 & 0 & 0 & 1 & 1 \\ \hline
0 & 0 & 1 & 0 & 0 & 0 & 0 & 1 & 0 \\ \hline
0 & 0 & 0 & 1 & 0 & 0 & 0 & 0 & 1 \\ \hline
0 & 0 & 0 & 0 & 1 & 0 & 0 & 0 & 0 \\ \hline
0 & 0 & 0 & 0 & 0 & 1 & 0 & 0 & 0 \\ \hline
0 & 0 & 0 & 0 & 0 & 0 & 1 & 0 & 0 \\ \hline
0 & 0 & 0 & 0 & 0 & 0 & 0 & 1 & 0 \\ \hline
0 & 0 & 0 & 0 & 0 & 0 & 0 & 0 & 1 \\
\hline
\end{tabular}
\newline
\begin{figure}[h!]
\begin{center}
\includegraphics [width=0.5\textwidth]{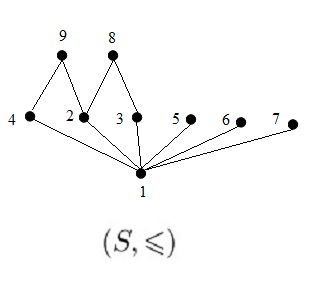}
\end{center}
\caption{partial ordering.}
\label{6}
\end{figure}

The above figure is partial ordering over S.
From above Proposition, $A=\{6,7,8,9\}$ and $B=\{2,3,4,5\}$ that from A and B, we can recover the initial code C.

\end{example}
\[\]
\textbf{Conclusions.} Even if, from the above examples, appears that the associated binary block codes depend only from the order relation defined on a $BCK$-algebra, will be very interesting to study in a further paper how and if the properties of $BCK$-algebras can influence the properties of the associated binary block codes.

\[\]
\textbf{References}%
\begin{equation*}
\end{equation*}

[Im, Is; 66] Y. Imai, K. Iseki, \textit{On axiom systems of propositional
calculi}, Proc. Japan Academic, \textbf{42(1966)}, 19-22.

[Fl; 14] C. Flaut, \textit{$BCK$-algebras arising from block codes, }arxiv.

[Ju, So; 11] Y. B. Jun, S. Z. Song, \textit{Codes based on $BCK$-algebras},
Inform. Sciences., \textbf{181(2011)}, 5102-5109.

[Me, Ju; 94] J. Meng and Y.B. Jun, \textit{$BCK$-algebras}, Kyung Moon Sa Co. Seoul, Korea, 1994.

Arsham Borumand Saeid\\
Dept. of Math. Shahid Bahonar University of Kerman, Kerman, Iran\\e-mail: arsham@uk.ac.ir\\
Cristina Flaut\\Faculty of Mathematics and Computer Science,
Ovidius University,
Bd. Mamaia 124, 900527, Constanta,
ROMANIA\\e-mail: cristina$\_$flaut@yahoo.com\\H. Fatemidokht and Marjan Kuchaki Rafsanjani\\Dept. of Computer Science Shahid Bahonar University of Kerman, Kerman, Iran\\e-mail: kuchaki@uk.ac.ir.
\end{document}